\documentclass[10pt, conference]{IEEEtran}
\IEEEoverridecommandlockouts

\usepackage[
backend=bibtex,
style=ieee,
backref=false,
]{biblatex}
\usepackage{graphicx}				

\usepackage[T1]{fontenc}

\usepackage{enumitem}

\DeclareFontEncoding{FMS}{}{}
\DeclareFontSubstitution{FMS}{futm}{m}{n}
\DeclareFontEncoding{FMX}{}{}
\DeclareFontSubstitution{FMX}{futm}{m}{n}
\DeclareSymbolFont{fouriersymbols}{FMS}{futm}{m}{n}
\DeclareSymbolFont{fourierlargesymbols}{FMX}{futm}{m}{n}
\DeclareMathDelimiter{\VERT}{\mathord}{fouriersymbols}{152}{fourierlargesymbols}{147}

\usepackage[english]{babel}
\usepackage{amsthm}
\usepackage{bbm}	
\usepackage{url}

\usepackage[
 colorlinks=true, bookmarks=true]{hyperref} 
\usepackage{ifthen}
\usepackage{tikz}
\usetikzlibrary{decorations.pathreplacing}
\usepackage{tikz-cd}

\usepackage{blindtext}
\usepackage{scrextend}
\addtokomafont{labelinglabel}{\sffamily}

\usepackage{amssymb}
\usepackage{amsmath}
\usepackage{cleveref}
\usepackage{mathrsfs}
\newcommand{\R}{\mathbb{R}}

\newcommand{\E}{\mathbb{E}}

\newcommand{\1}{\mathbf{1}}

\newcommand{\N}{\mathbb{N}}

\newcommand{\supp}{\operatorname{supp}}

\newcommand{\bern}{\operatorname{Bern}}

\newcommand{\norm}[1]{\left\lVert#1\right\rVert}

\newtheorem{theorem}{Theorem}[section]
\newtheorem{corollary}[theorem]{Corollary}
\newtheorem{lemma}[theorem]{Lemma}
\newtheorem{proposition}[theorem]{Proposition}
\theoremstyle{definition}
\newtheorem{definition}[theorem]{Definition}
\newtheorem{example}[theorem]{Example}

\newtheorem{remark}[theorem]{Remark}

\title{Beyond Hypercontractivity: Stronger Communication Converses via R\'enyi Contraction}

\author{\IEEEauthorblockN{Ryo Hommura Moore}
\IEEEauthorblockA{\textit{Okinawa Institute of Science and Technology} \\
Onna-son, Okinawa, Japan \\
ryo.hommura@oist.jp}
\and
\IEEEauthorblockN{Amedeo Roberto Esposito}
\IEEEauthorblockA{\textit{Okinawa Institute of Science and Technology} \\
Onna-son, Okinawa, Japan \\
amedeo.esposito@oist.jp}
}

\begin{document}
\renewcommand{\figurename}{Fig.}

\maketitle

\begin{abstract}
    We study the relationship between strong data-processing inequalities for R\'enyi divergences and hypercontractivity. For R\'enyi divergences of order greater than one, we characterize the contraction of R\'enyi divergences through a property of the dual Markov operator that we name power contraction. Using the contraction coefficient, we derive an upper bound on rectangle probabilities in terms of their marginals that is at least as strong as the corresponding bound obtained via hypercontractivity. Combining this result with inversion of the binary Rényi divergence, we derive communication lower bounds for one-way communication-assisted agreement distillation. Examples involving binary symmetric and rare-event Z-channels illustrate the improvements over standard hypercontractive converses.
\end{abstract}

\begin{IEEEkeywords}
R\'enyi divergences, strong data processing inequality, hypercontractivity, communication assisted agreement distillation, Markov kernels
\end{IEEEkeywords}

\section{Introduction}
One of the fundamental results in information theory states that one cannot increase the amount of information in a signal or system by processing it; this is called the data-processing inequality (DPI). A sharpening of this result, known as the strong data-processing inequality (SDPI), may be used to quantify the amount of information contraction guaranteed by the DPI (cf. \cite[Ch. 33]{PolyanskiyWu2025}). Ahlswede and G\'acs \cite{AG76} have shown in their remarkable work that the SDPI coefficient of the Kullback-Leibler (KL) divergence has many deep connections with other important quantities, such as the maximal correlation and so-called hypercontractivity constants of the corresponding Markov operator. Beyond the KL divergence, Raginsky's work \cite{Raginsky2016} thoroughly studies SDPIs for $\Phi$-divergences. On the other hand, less is known regarding the SDPIs of R\'enyi-$\alpha$ divergences, since the nature of the contraction provided by these inequalities seems to be quite different. Some properties of these SDPIs have been studied in \cite{JEG_2024, VEG_contraction}.
For R\'enyi divergences of order $\alpha>1$, strong data-processing inequalities admit a useful operator interpretation. Let $K$ be a Markov kernel with reference input distribution $\mu$, and let $K_\mu^\star$ denote the associated dual operator transporting input densities to output densities. The inequality $D_\alpha(\nu K\|\mu K)\le\theta D_\alpha(\nu\|\mu)$ is equivalent to
$
\|K^\star f\|_{L^\alpha(\mu K)} \le \|f\|_{L^\alpha(\mu)}^\theta
$
for every nonnegative $f$ with $\int f\,\mathrm{d}\mu=1$. We call this property $L^\alpha$-power contraction when $0<\theta<1$. It describes contraction through a power $\theta$ of the input $L^\alpha$-norm with $0<\theta<1$, whereas hypercontractivity controls the output $L^\alpha$-norm by an input $L^\beta$-norm with $\beta<\alpha$.

Building on the relationship between hypercontractivity and R\'enyi divergence in~\cite[§IV]{Raginsky2013}, we compare these two operator properties. Hypercontractivity implies power contraction with an explicit exponent, but the converse need not hold. We use the optimal R\'enyi SDPI coefficient to bound rectangle probabilities $P_{XY}(A\times B)$ in terms of their marginals. The resulting bound is no weaker than the corresponding one induced by hypercontractivity.

We apply this bound to communication-assisted agreement distillation~\cite{AG76,GR16,Zhao2022}, in which two parties observe correlated samples and seek to agree on a uniformly distributed binary string using limited communication. Combining R\'enyi SDPI coefficients with binary inversion, we obtain a refined communication converse. We then derive an explicit relaxation that expresses the communication lower bound directly in terms of the SDPI coefficient. Even after this relaxation, the bound is no weaker than the corresponding hypercontractive converse and can be strictly sharper. We conclude with numerical examples illustrating these improvements.

\section{Preliminaries}


We will use the following notations and conventions in this paper: First, given a nonzero real number $p$, we denote $p'$ to be the H\"older conjugate of $p$, i.e. if $p \neq 1$, $p' = \frac{p}{p-1}$ so that $\frac{1}{p} + \frac{1}{p'} = 1$, and if $p = 1$, we denote $p' = \infty$. Second, the logarithmic function will be of base 2 (i.e. $\log(x) := \log_2(x)$).

\subsection{Markov kernels}
Let $(\Omega, \mathcal{F}, \mu)$ be a probability space. We recall that $K: \mathcal{F} \times \Omega \to [0, 1]$ is called a \textit{Markov kernel} (or \textit{channel}) if
for every $x \in \Omega$, the map $E \in \mathcal{F} \mapsto K(E|x)$ is a probability measure on $(\Omega, \mathcal{F})$, and for every $E \in \mathcal{F}$, the mapping $x \in \Omega \mapsto K(E|x)$ is an $\mathcal{F}$-measurable function.

For every $E \in \mathcal{F}$, we denote $\mu K(E) = \int K(E|x) \, \mathrm{d}\mu(x)$, and for any $p \in (0, \infty)$ and for every bounded and measurable function $f$, we denote $Kf(x) := \int f(y) \, \mathrm{d}K(y|x)$. Furthermore, if $p > 1$, Jensen's inequality tells us that
$
\norm{Kf}_{L^p(\mu)}^p 
\leq \norm{f}_{L^p(\mu K)}^p \, , 
$
which implies that the map $K$ can be extended to a map from $L^p(\mu K)$ to $L^p(\mu)$. 

Suppose that $\mu \otimes K $ is a joint distribution of a random pair $(X, Y) \in \mathsf{X} \times \mathsf{Y}$ such that $P_X = \mu$ and $P_{Y|X} = K$ (hence $P_Y = \mu K$). Then $Kf(x) = \E[f(Y)|X = x]$ for any $f \in L^1(\mu K)$ and for $\mu$-a.e. $x \in \mathsf{X}$. 

Given a Markov kernel $K$ and a probability measure $\mu$, we can define the dual (adjoint) operator $K_\mu^\star: L^{p'}(\mu) \to L^{p'}(\mu K)$. In particular, if $f \in L^p(\mu K)$ and $g \in L^{p'}(\mu)$, then
$\int Kf \cdot g \, \mathrm{d}\mu = \int f \cdot K_\mu^\star g \, \mathrm{d}\mu K \, . $
When $(\Omega, \mathcal{F}, \mu)$ is a finite probability space in which $\mu K$ has full support, we have (see \cite[(1.1)]{Raginsky2016})
\[ K^\star_\mu(x|y) = \frac{K(y|x)\mu(x)}{\mu K(y)} \, . \]

\subsection{R\'enyi Divergence, SDPI Constant}\label{ss:Renyi}
Let $\alpha \in (0, 1) \cup (1, \infty)$ be given. We recall that the \textit{R\'enyi divergence of order $\alpha$} between two probability measures $\mu$ and $\nu$ on $\mathsf{X}$ for which $\nu \ll \mu$ is defined so that we have
\[ D_\alpha(\nu || \mu) := \frac{1}{\alpha - 1} \log \E_\mu \left[ \left(\frac{\mathrm{d}\nu}{\mathrm{d}\mu}\right)^\alpha \right] = \frac{\alpha}{\alpha-1} \log \norm{\frac{\mathrm{d}\nu}{\mathrm{d}\mu}}_\alpha \, .\]

Elementary properties of R\'enyi divergence are summarized in \cite{VH_Renyi}. To mention a few notable facts, we know that $D_\alpha$ is non-decreasing in $\alpha$, continuous in $\alpha$ on $\R_{\geq 0}$ whenever $D_\alpha < \infty$ (so $D_\infty$ can be defined by evaluating the limit), and the KL divergence $D(\nu||\mu)$ corresponds to the limit $\alpha \to 1$ provided that $D_\beta < \infty$ for some $\beta > 1$.

We also have the \textit{data-processing inequality (DPI)} for R\'enyi-$\alpha$ divergences, which says that given a Markov kernel $K$, for any $\alpha \in [0, \infty]$ we have
    \begin{equation}\label{DPI}
        D_\alpha(\nu K || \mu K) \leq D_\alpha (\nu || \mu) \, .
    \end{equation}
Let us discuss (\ref{DPI}) a bit further; we fix a probability measure $\mu$, and suppose that there exists $c \in [0, 1]$ such that for every probability measure $\nu \ll \mu$, we have
$D_\alpha(\nu K || \mu K) \leq c D_\alpha(\nu || \mu)$.
In this case, we say $K$ satisfies \textit{the R\'enyi-$\alpha$ strong data-processing inequality (SDPI) at $\mu$ with constant $c$}, and we define the \textit{SDPI constant} $\eta_\alpha(\mu, K)$ for R\'enyi $\alpha$-divergence to be the infimum of $c \in [0, 1]$ for which the estimate is satisfied; alternatively, and more formally,
\[\eta_\alpha(\mu, K) := \sup_{\nu: 0 < D_\alpha(\nu||\mu) < \infty} \frac{D_\alpha(\nu K|| \mu K)}{D_\alpha(\nu||\mu)} \, . \]

\section{$L^\alpha$-Power contraction}
We discuss the notion of contraction that one can obtain from R\'enyi divergences. In fact, we will prove a result in a more general functional analytic setting, followed by an example with Markov kernels. First, we define such contraction:
\begin{definition}
    Let $(\mathsf{X}, \mu)$ and $(\mathsf{Y}, \rho)$ be two measure spaces, where we assume $\mu$ to be a probability measure. Let $\alpha \in (1, \infty)$. Suppose $T: L^\alpha(\mu) \to L^\alpha(\rho)$ is a positive linear operator such that there exists $c \in (0, 1)$ such that for every mean-1, non-negative $f \in L^\alpha(\mu)$, we have $\norm{Tf}_\alpha \leq \norm{f}_\alpha^c \, .$ Then we say $T$ is an \textit{$L^\alpha$-power contraction (with exponent $c$)}.
\end{definition}
The next theorem will show how an $L^\alpha$-power contraction contracts any function in $L^\alpha(\mu)$. 
\begin{theorem}\label{t:expContraction}
Let $(\mathsf{X}, \mu)$ and $(\mathsf{Y}, \rho)$ be two measure spaces, where we assume $\mu$ to be a probability measure. Let $\alpha \in (1, \infty)$. Suppose that $T: L^\alpha(\mu) \to L^\alpha(\rho)$ is an $L^\alpha$-power contraction with exponent $c \in (0, 1)$. Then for every $g \in L^\alpha(\mu)$, we have
\begin{equation}
    \label{eq:expContraction}
    \norm{Tg}_{L^\alpha(\rho)} \leq \norm{g}_{L^1(\mu)}^{1-c} \cdot \norm{g}_{L^\alpha(\mu)}^c \, .
\end{equation}
Moreover, $T$ is a contraction (hence bounded).
\end{theorem}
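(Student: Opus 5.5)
Since $T$ is linear, the argument reduces to the defining inequality by normalization, and then uses positivity to pass to signed functions.

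\textbf{Non-negative $g$.} Suppose first that $g \geq 0$ with $g \in L^\alpha(\mu)$. If $\norm{g}_{L^1(\mu)} = 0$, then $g = 0$ $\mu$-a.e., so $Tg = 0$ by linearity and \eqref{eq:expContraction} is trivial. Otherwise set $f := g/\norm{g}_{L^1(\mu)}$. Then $f$ is non-negative, has mean $1$ (since $\mu$ is a probability measure and $g\ge 0$, $\int f\,\mathrm{d}\mu = 1$), and lies in $L^\alpha(\mu)$. The power contraction hypothesis gives $\norm{Tf}_\alpha \leq \norm{f}_\alpha^c$. Multiplying by $\norm{g}_1$ and using linearity, $Tg = \norm{g}_1 Tf$, we get
\[
\norm{Tg}_\alpha \leq \norm{g}_1 \cdot \norm{g}_\alpha^c \norm{g}_1^{-c} = \norm{g}_1^{1-c}\norm{g}_\alpha^c .
\]

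\textbf{General $g$.} This is the step needing care. For real $g$, positivity of $T$ gives the pointwise bound $|Tg| \leq T|g|$ $\rho$-a.e. To see this, write $g = g^+ - g^-$. Then $Tg = Tg^+ - Tg^-$ with $Tg^\pm \ge 0$, so $|Tg| \le Tg^+ + Tg^- = T|g|$. Hence
\[
\norm{Tg}_\alpha \leq \norm{T|g|}_\alpha \leq \norm{|g|}_1^{1-c}\norm{|g|}_\alpha^c = \norm{g}_1^{1-c}\norm{g}_\alpha^c .
\]
For complex-valued $g$, I would either apply the argument to real and imaginary parts, or use the standard fact that positive operators satisfy $|Tg| \le T|g|$. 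The latter follows by writing $\mathrm{Re}(e^{-i\theta}Tg) = T(\mathrm{Re}(e^{-i\theta} g)) \le T|g|$ and taking the supremum over a countable dense set of $\theta$. If the paper's functions are real-valued, this reduces to the decomposition above.

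\textbf{Contraction.} Since $\mu$ is a probability measure and $\alpha>1$, Jensen's (or Hölder's) inequality gives $\norm{g}_1 \leq \norm{g}_\alpha$. Therefore
\[
\norm{Tg}_\alpha \leq \norm{g}_\alpha^{1-c}\norm{g}_\alpha^c = \norm{g}_\alpha ,
\]
so $\norm{T} \leq 1$ and $T$ is bounded. The main obstacle is only the pointwise inequality $|Tg|\le T|g|$ from positivity; the rest is scaling.
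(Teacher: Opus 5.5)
Your proposal is correct and follows the paper's proof essentially step for step: you normalize non-negative $g$ to mean one and apply the hypothesis, use positivity to get $|Tg|\le T|g|$ for general $g$, and then use $\norm{g}_1\le\norm{g}_\alpha$ to conclude that $T$ is a contraction. Your explicit derivation of $|Tg|\le T|g|$, and your treatment of the case $\norm{g}_1=0$, fill in details the paper leaves implicit.
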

\begin{proof}
    The inequality is trivial if $g \equiv 0$, so we assume otherwise.
    First, consider a non-negative function $f \in L^\alpha(\mu)$ for which $\norm{f}_1 \neq 0$. Let $h = f / \norm{f}_1$. Then $\norm{h}_1 = 1$, so the hypothesis of the theorem implies that $\norm{Th}_\alpha \leq \norm{h}_\alpha^c$, or equivalently,
    \[ \frac{\norm{Tf}_\alpha}{\norm{f}_1}  \leq \left(\frac{\norm{f}_\alpha}{\norm{f}_1}\right)^c \, . \]
    Multiply both sides by $\norm{f}_1$, and we get (\ref{eq:expContraction}) for non-negative functions.

    Let $g \in L^\alpha(\mu)$ be any function. Since $T$ is positive, we have $|Tg| \leq T|g|$. Since $|g|$ is non-negative, we have
    $\norm{Tg}_\alpha \leq \norm{T|g|}_\alpha \leq \norm{g}_1^{1-c} \cdot \norm{g}_\alpha^c .$

    Apply Jensen's inequality, $\norm{g}_1 \leq \norm{g}_\alpha \, $, to (\ref{eq:expContraction}) to observe that $T$ is a contraction.
\end{proof}
An important example of an $L^\alpha$-power contraction is the dual of a Markov kernel, and we can identify the optimal exponent.
\begin{example}\label{ex:SDPI}
Choose any probability measure $\nu \ll \mu$ for which $f := \frac{\mathrm{d}\nu}{\mathrm{d}\mu} \in L^\alpha(\mu)$. It is well known that $K^\star f = \frac{\mathrm{d}\nu K}{\mathrm{d}\mu K}$, $\mu  K$-a.e. (cf. \cite[Lemma 2.2]{EM_contraction}), so we have
$\norm{K_\mu^\star f}_{L^\alpha(\mu K)} \leq \norm{f}_{L^\alpha(\mu)}^{\eta_\alpha(\mu, K)} \, .$
Therefore, $K^\star_\mu$ is an $L^\alpha$-power contraction with exponent $\eta_\alpha(\mu, K)$ (assuming that this quantity is in $(0, 1)$). In fact, an alternative characterization of $\eta_\alpha(\mu, K)$ is
$\eta_\alpha(\mu, K)= \inf \bigl\{ c \in [0, 1]: \norm{K^\star_\mu f}_{L^\alpha(\mu K)} \leq \norm{f}_{L^1(\mu)}^{1-c}\norm{f}_{L^\alpha(\mu)}^c \text{for all } f \in L^\alpha(\mu) \bigr\} \, .
$
\end{example}
The following theorem provides an alternative characterization of a Markov kernel whose dual is an $L^\alpha$-power contraction.
\begin{theorem}
    \label{t:HolderRenyi}
    Let $K$ be a Markov kernel on a probability space $(\Omega, \mathcal{F}, \mu)$, and let $\alpha > 1$, and let $c \in (0, 1)$ be fixed. The following statements are equivalent.
    \begin{enumerate}[label = (\alph*)]
        \item The dual operator $K_\mu^\star: L^\alpha(\mu) \to L^\alpha(\mu K)$ is an $L^\alpha$-power contraction with exponent $c$.
        \item For every $f \in L^\alpha(\mu)$ and $g \in L^{\alpha'}(\mu K)$, we have
        \begin{align*} &\E_{\mu \otimes K} [f(X)g(Y)] \\
        &\leq \norm{f(X)}_{L^1(\mu)}^{1-c} \norm{f(X)}_{L^\alpha(\mu)}^c \norm{g(Y)}_{L^{\alpha'}(\mu K)} \, . \end{align*}
    \end{enumerate}
\end{theorem}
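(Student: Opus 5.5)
The plan is to use duality between $L^\alpha(\mu K)$ and $L^{\alpha'}(\mu K)$, together with the adjoint relation $\int Kg\cdot f\,\mathrm{d}\mu = \int g\cdot K_\mu^\star f\,\mathrm{d}\mu K$. The key identity is
\[ \E_{\mu\otimes K}[f(X)g(Y)] = \int f\, Kg \,\mathrm{d}\mu = \int (K_\mu^\star f)\, g\,\mathrm{d}\mu K , \]
valid for $f\in L^\alpha(\mu)$ and $g\in L^{\alpha'}(\mu K)$. Its first equality follows from $Kg(x)=\E[g(Y)\mid X=x]$ and the tower property. Its second equality is the defining adjoint relation, applied with $p=\alpha'$, so that $Kg\in L^{\alpha'}(\mu)$ by Jensen. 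Integrability of $f(X)g(Y)$ follows from H\"older, since $\norm{Kg}_{\alpha'}\le\norm{g}_{\alpha'}$.

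For (a)$\Rightarrow$(b), take any $f\in L^\alpha(\mu)$ and $g\in L^{\alpha'}(\mu K)$. Apply H\"older on $L^\alpha(\mu K)\times L^{\alpha'}(\mu K)$ to the right-hand side of the identity to get $\E[f(X)g(Y)]\le \norm{K_\mu^\star f}_{L^\alpha(\mu K)}\norm{g}_{L^{\alpha'}(\mu K)}$. Then invoke Theorem~\ref{t:expContraction}, which applies because $K_\mu^\star$ is a positive linear operator and, by (a), an $L^\alpha$-power contraction. It gives $\norm{K_\mu^\star f}_\alpha\le\norm{f}_1^{1-c}\norm{f}_\alpha^c$ for arbitrary (not necessarily nonnegative) $f$, which yields (b). Positivity of $K^\star_\mu$ is immediate from the kernel/adjoint relation: if $f\ge0$ then $\int K^\star_\mu f\cdot \1_B\,\mathrm{d}\mu K=\int f\,K\1_B\,\mathrm{d}\mu\ge0$ for every $B$.

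For (b)$\Rightarrow$(a), fix a nonnegative, mean-one $f\in L^\alpha(\mu)$, so that $\norm{f}_1=1$. The $L^\alpha$–$L^{\alpha'}$ duality gives
\[ \norm{K_\mu^\star f}_{L^\alpha(\mu K)}=\sup\Bigl\{\int K_\mu^\star f\, g\,\mathrm{d}\mu K : \norm{g}_{L^{\alpha'}(\mu K)}\le 1\Bigr\}. \]
By the identity, each quantity inside the supremum equals $\E[f(X)g(Y)]$, and (b) bounds it by $\norm{f}_\alpha^c$. Hence $\norm{K_\mu^\star f}_\alpha\le\norm{f}_\alpha^c$, which is (a).

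The main obstacle is the bookkeeping in the (b)$\Rightarrow$(a) direction: the duality formula requires $K_\mu^\star f\in L^\alpha(\mu K)$ to be known a priori, or at least requires a norming function that lies in $L^{\alpha'}$. To handle this, I will use the explicit norming function $g=(K^\star_\mu f)^{\alpha-1}/\norm{K^\star_\mu f}_\alpha^{\alpha-1}$ after truncating $f$ to $f\wedge n$. The truncated functions are bounded, so their images are bounded too, by positivity and the fact that $K^\star_\mu 1=1$. I will then pass to the limit by monotone convergence, using positivity of $K^\star_\mu$, since the right-hand side of (b) is continuous under this monotone truncation. A truncation $f\wedge n$ no longer has mean one, but this is harmless because (b) is stated for arbitrary $f$. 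Finally, the case $\norm{g}_{\alpha'}=0$ is trivial.
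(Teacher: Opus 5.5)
Your proposal is correct and follows essentially the paper's proof: (a)$\Rightarrow$(b) goes through the tower property, H\"older and Theorem~\ref{t:expContraction}, and (b)$\Rightarrow$(a) tests (b) against $g=(K_\mu^\star f)^{\alpha-1}$. The truncation $f\wedge n$ is an unnecessary detour: $K_\mu^\star f=\E[f(X)\mid Y]$ is already an $L^\alpha$-contraction by Jensen, so $\norm{(K_\mu^\star f)^{\alpha-1}}_{L^{\alpha'}(\mu K)}^{\alpha'}=\norm{K_\mu^\star f}_\alpha^\alpha\le\norm{f}_\alpha^\alpha<\infty$ holds directly, and this is exactly how the paper justifies plugging this $g$ into (b).
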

\begin{proof}
    First we prove that (a) implies (b). By the tower property of conditional expectations and H\"older's inequality, we obtain
    \begin{align*}
        \E_{\mu \otimes K}[f(X)g(Y)] 
        &= \E_{\mu K}[\E_{\mu \otimes K}[f(X)g(Y)|Y]] \\
        &= \E_{\mu K}[K^\star_\mu f(Y) g(Y)] \\
        &\leq \norm{K^\star_\mu f(Y)}_{L^\alpha(\mu K)} \norm{ g(Y)}_{L^{\alpha'}(\mu K)} \, .
    \end{align*}
    We apply Theorem \ref{t:expContraction} to show that (b) holds.

    Next, we prove that (b) implies (a). Let $f \in L^\alpha(\mu)$ be a non-negative function for which $\E_\mu f = 1$ (which implies that $\norm{f}_{L^1(\mu)} = 1)$. By the tower property, we have
    \begin{align*}
        \norm{K^\star_\mu f(Y)}_{L^\alpha(\mu K)}^\alpha
        &= \E_{\mu K}[K_\mu^\star f(Y) \cdot K_\mu^\star f(Y)^{\alpha-1}] \\
        &= \E_{\mu  K}[\E_{\mu \otimes K}[f(X)|Y] \cdot K_\mu^\star f(Y)^{\alpha - 1}] \\
        &= \E_{\mu \otimes  K}[f(X)K_\mu^\star f(Y)^{\alpha-1}] \, .
    \end{align*}
    We may apply our hypothesis provided that $(K_\mu^\star f)^{\alpha - 1} \in L^{\alpha'}(\mu K)$. To verify this, recall that the Markov kernel $K$ is a contraction, and so is $K_\mu^\star$. Thus, we apply Jensen's inequality to observe that
    $\E_{\mu K}[K_\mu^\star f(Y)^{(\alpha-1)\cdot \frac{\alpha}{\alpha - 1}}]= \E_{\mu K}[K_\mu^\star f(Y)^\alpha] \leq \E_{\mu}[f(X)^\alpha] < \infty \, .  $
    Hence, we apply our hypothesis to see that
    \begin{align*} 
    \norm{K_\mu^\star f(Y)}_{L^\alpha(\mu  K)}^\alpha 
    &\leq \norm{f(X)}_{L^\alpha(\mu)}^{c} \norm{K_\mu^\star f(Y)}_{L^\alpha(\mu K)}^{\alpha/\alpha'} \, .  \end{align*}
    By dividing both sides by $\norm{K_\mu^\star f(Y)}_{L^\alpha(\mu K)}^{\alpha/\alpha'}$ (which is positive), and observing that $1 - \frac{1}{\alpha'} = \frac{1}{\alpha}\,$, we obtain (a).
\end{proof}
One can use this result to obtain an upper bound for the rectangle probability.
\begin{corollary}\label{rectangleBound}
    For any measurable sets $A \subset \mathsf{X}$ and $B \subset \mathsf{Y}$, we have
    \begin{equation}
        \mu \otimes K(A \times B) \leq \mu(A)^{1 - \eta_\alpha(\mu, K)/\alpha'} \mu K(B)^{1/\alpha'} \, .
    \end{equation}
\end{corollary}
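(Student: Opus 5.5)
Apply Theorem~\ref{t:HolderRenyi}(b) with indicator functions $f = \1_A$ and $g = \1_B$, taking the exponent $c = \eta_\alpha(\mu, K)$. Example~\ref{ex:SDPI} identifies $K^\star_\mu$ as an $L^\alpha$-power contraction with exactly this exponent, so statement (a) of Theorem~\ref{t:HolderRenyi} holds and (b) follows. Indicators of measurable sets are bounded, hence lie in $L^\alpha(\mu)$ and $L^{\alpha'}(\mu K)$ respectively, and the left-hand side of (b) becomes
\[ \E_{\mu\otimes K}[\1_A(X)\1_B(Y)] = \mu\otimes K(A\times B). \]

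Next I compute the norms on the right-hand side. We have $\norm{\1_A}_{L^1(\mu)} = \mu(A)$, $\norm{\1_A}_{L^\alpha(\mu)} = \mu(A)^{1/\alpha}$, and $\norm{\1_B}_{L^{\alpha'}(\mu K)} = \mu K(B)^{1/\alpha'}$. The bound from (b) is therefore
\[ \mu(A)^{1-c}\,\mu(A)^{c/\alpha}\,\mu K(B)^{1/\alpha'} = \mu(A)^{1-c(1-1/\alpha)}\,\mu K(B)^{1/\alpha'}. \]
Since $1 - 1/\alpha = 1/\alpha'$, the exponent of $\mu(A)$ is $1 - \eta_\alpha(\mu,K)/\alpha'$, which is the claimed inequality.

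The only real obstacle is that Theorem~\ref{t:HolderRenyi} assumes $c \in (0,1)$, while $\eta_\alpha(\mu,K)$ may equal $0$ or $1$, and the SDPI coefficient is a supremum, so it need not be attained. I would handle this as follows.

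\textbf{Attainment.} By definition of the supremum, $\norm{K^\star_\mu f}_\alpha \le \norm{f}_\alpha^{\eta_\alpha}$ holds for every mean-one density $f$. When $D_\alpha(\nu\|\mu)=0$ the two sides are both $1$, so the inequality holds there as well. Hence the exponent $\eta_\alpha$ itself works whenever it lies in $(0,1)$.

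\textbf{The case $\eta_\alpha = 1$.} Here the claim reduces to the plain Hölder bound $\mu\otimes K(A\times B) \le \mu(A)^{1/\alpha}\mu K(B)^{1/\alpha'}$. This follows from the argument of (a)$\Rightarrow$(b) together with Jensen's inequality, i.e.\ the fact that $K^\star_\mu$ is a contraction.

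\textbf{The case $\eta_\alpha = 0$.} I would take any $c \in (0,1)$: the power-contraction hypothesis holds trivially, since $\norm{K^\star f}_\alpha \le 1 \le \norm{f}_\alpha^c$. Letting $c \downarrow 0$ then gives the bound with exponent $1$ on $\mu(A)$, because the right-hand side is continuous in $c$.

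Both edge cases rest on the fact that the proofs of Theorems~\ref{t:expContraction} and~\ref{t:HolderRenyi} never use $c<1$ or $c>0$ strictly.
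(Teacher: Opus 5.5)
Your proof is correct and follows the paper's own argument: you substitute $f=\1_A$ and $g=\1_B$ into Theorem~\ref{t:HolderRenyi}(b), justify $c=\eta_\alpha(\mu,K)$ via Example~\ref{ex:SDPI}, and simplify $\mu(A)^{1-c}\mu(A)^{c/\alpha}=\mu(A)^{1-c/\alpha'}$. Your checks that the supremum itself serves as an exponent and that the boundary cases $\eta_\alpha\in\{0,1\}$ go through are sound, and they supply details the paper's one-line proof leaves implicit (Example~\ref{ex:SDPI} there assumes $\eta_\alpha\in(0,1)$).
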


\begin{proof}
    Set $f = \1_A$ and $g = \1_B$ in Theorem \ref{t:HolderRenyi}, and choose $c = \eta_\alpha(\mu, K)$ as we discussed in Example \ref{ex:SDPI}.
\end{proof}
Corollary~\ref{rectangleBound} improves over the corresponding result given by hypercontractivity, as we will see in \S\ref{sec:RDH}.
\section{R\'enyi divergence and hypercontractivity}\label{sec:RDH}
Let $\alpha, \beta \in \R$ such that $1 \leq \beta < \alpha$, and suppose that $T: L^\alpha \to L^\alpha$ is a linear operator. Recall that $T$ is \textit{$(\alpha, \beta)$-hypercontractive} if for any $f \in L^\alpha(\mu)$, we have
$
    \norm{Tf}_\alpha \leq \norm{f}_\beta \, .
$

Here is a necessary condition for hypercontractivity, that relates to the type of operator discussed in Theorem \ref{t:expContraction}.
\begin{proposition}\label{T:Nec}
    Let $(\mathsf{X}, \mu)$ be a probability space, and let $(\mathsf{Y}, \nu)$ be a measure space. Suppose that $T: L^\alpha(\mu) \to L^\alpha(\nu)$ is a positive linear operator. If $T$ is $(\alpha, \beta)$-hypercontractive for some $\beta \in (1, \alpha)$,
    then $T$ is an $L^\alpha$-power contraction with exponent
    $c = \frac{\alpha'}{\beta'} =  \frac{\alpha(\beta-1)}{\beta(\alpha-1)} \, .$
\end{proposition}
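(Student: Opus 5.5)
The plan is to check the definition of an $L^\alpha$-power contraction directly. Fix a non-negative $f \in L^\alpha(\mu)$ with $\E_\mu f = 1$, so that $\norm{f}_1 = 1$. Hypercontractivity gives $\norm{Tf}_\alpha \le \norm{f}_\beta$. It therefore suffices to bound $\norm{f}_\beta$ by $\norm{f}_\alpha^{c}$ with $c = \alpha'/\beta'$, and log-convexity of $L^p$ norms (the Riesz--Thorin/H\"older interpolation inequality) is the natural tool for this.

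Since $1 < \beta < \alpha$, write
\[
\frac{1}{\beta} = \frac{1-\theta}{1} + \frac{\theta}{\alpha}
\]
for some $\theta \in (0,1)$. H\"older's inequality, applied to $f = f^{1-\theta} f^{\theta}$ with the exponents $\frac{1}{(1-\theta)\beta}$ and $\frac{\alpha}{\theta\beta}$, gives
\[
\norm{f}_\beta \le \norm{f}_1^{1-\theta}\norm{f}_\alpha^{\theta} = \norm{f}_\alpha^{\theta}.
\]
Next I solve for $\theta$. From $1 - \frac{1}{\beta} = \theta\left(1 - \frac{1}{\alpha}\right)$ we get $\frac{1}{\beta'} = \theta \cdot \frac{1}{\alpha'}$, so $\theta = \alpha'/\beta' = \frac{\alpha(\beta-1)}{\beta(\alpha-1)}$. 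This is exactly the claimed exponent $c$. It lies in $(0,1)$ because $\beta < \alpha$ implies $\beta' > \alpha'$, and both conjugates are finite and positive. Combining the two inequalities yields $\norm{Tf}_\alpha \le \norm{f}_\alpha^{c}$ for every mean-one non-negative $f$, and positivity and linearity of $T$ are given by hypothesis, so $T$ is an $L^\alpha$-power contraction.

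The only delicate point is that hypercontractivity is stated for $f \in L^\alpha(\mu)$, so $\norm{f}_\beta$ must be finite. This holds because $\mu$ is a probability measure, so $L^\alpha(\mu) \subset L^\beta(\mu)$. The interpolation step is routine, but it is where the exponent bookkeeping has to be done carefully so that it matches $\alpha'/\beta'$ rather than its reciprocal.
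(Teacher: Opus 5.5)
Your proposal is correct and is essentially the paper's proof: hypercontractivity gives $\norm{Tf}_\alpha \le \norm{f}_\beta$, and the interpolation inequality $\norm{f}_\beta \le \norm{f}_1^{1-c}\norm{f}_\alpha^{c}$ with $c = \alpha'/\beta'$ finishes the argument. The paper simply cites Riesz--Thorin for the second step, whereas you derive it explicitly from H\"older (strictly speaking applied to $f^\beta = f^{(1-\theta)\beta} f^{\theta\beta}$, which is what your stated exponents correspond to), and your computation $\theta = \alpha'/\beta' \in (0,1)$ is correct.
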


\begin{proof}
    By the $(\alpha, \beta)$-hypercontractivity and the Riesz-Thorin interpolation, we must have $\norm{Tf}_\alpha \leq \norm{f}_\beta \leq \norm{f}_1^{1-c}\norm{f}_\alpha^c \, ,$ where $c = \frac{\alpha(\beta-1)}{\beta(\alpha-1)} \,. $
\end{proof}
We will discuss the relationships between R\'enyi-SDPI constants and hypercontractivity in the setting of Markov kernels. In the remainder of the paper, given a Markov kernel $K$, we denote 
$$\begin{aligned}
\beta_{K}(\alpha) &:= \inf \bigl\{\beta \in [1, \alpha] : \norm{K^\star f}_{L^\alpha(\mu K)} \leq \norm{f}_{L^\beta(\mu)} 
\\ &\text{ for every } f \in L^\alpha(\mu) \bigr\}. \end{aligned}$$

First, Raginsky \cite[Proof of Theorem 6]{Raginsky2013} showed that if $K^\star_\mu$ is $(\alpha, \beta)$-hypercontractive, then for every $\nu \ll \mu$ for which $\nu \neq \mu$, we have 
        \[ \frac{D_\alpha(\nu K || \mu K)}{D_\beta(\nu||\mu)} \leq \frac{\alpha(\beta-1)}{\beta(\alpha-1)} \, . \]
    Recall that $p \mapsto D_p(\nu||\mu)$ is a non-decreasing function. Furthermore, for fixed $\alpha > 1$, the function $\beta \mapsto \frac{\alpha(\beta-1)}{\beta(\alpha-1)}$ is increasing in $\beta$ on $[1, \alpha)$. Combining these facts, we have
        \begin{equation}
            \label{SDPI_bound}
            \eta_\alpha(\mu, K) \leq h_\alpha(\mu, K) := \frac{\alpha'}{\beta_K(\alpha)'} \, .
        \end{equation}
    However, the fixed-parameter condition \(\eta_\alpha(\mu,K)\leq\alpha'/\beta'\) is not sufficient for \((\alpha,\beta)\)-hypercontractivity. For example, one can observe this if $K_\lambda$ is the doubly symmetric binary symmetric channel (DSBS) with flip rate $\lambda$. Indeed, recall that by the Bonami-Beckner inequality (cf. \cite[``The Hypercontractivity Theorem", p. 284]{O’Donnell_2014}), combined with \cite[Theorem 3.1]{KA_2016}, we have 
    \begin{equation}\label{DSBS}
    \beta_{K_\lambda}(\alpha)   =  1 + (1-2\lambda)^2(\alpha - 1) \, . 
    \end{equation}
    Notice that $K_{0.1}$ fails to be $(2, 1.6)$-hypercontractive since $\beta_{K_{0.1}}(2) = 1.64$. Nevertheless, by \cite[Corollary 6]{JEG_2024}, we have $\eta_2(\bern(1/2), K_{0.1}) \approx 0.7137 < 2'/1.6' = 0.75$.

Second, if $X$ is a random variable, and $Y$ is the outcome of $X$ through the Markov kernel $K$, the $(\alpha, \beta)$-hypercontractivity of $K^\star_\mu$ implies that for every $f \in L^{\beta}(\mu)$ and for any $g \in L^{\alpha'}(\mu K)$, we have
\begin{equation}
    \label{eq:HolderHC}
    \E_{P_{XY}}[f(X)g(Y)] \leq \norm{f(X)}_{L^{\beta}(P_X)} \norm{g(Y)}_{L^{\alpha'}(P_Y)} \, ,
\end{equation}
so in particular, if we set $f = \1_A$ and $g = \1_B$ for measurable sets $A \subset \mathsf{X}$ and $B \subset \mathsf{Y}$, we have
\begin{equation}\label{HC-bound1}
     P_{XY}(X \in A, Y \in B) \leq P_X(A)^{1/\beta} P_Y(B)^{1/\alpha'} \, . 
\end{equation}
Recall that Corollary \ref{rectangleBound} says that we have
\begin{equation}\label{ineq:JD-bound}
P_{XY}(X \in A, Y \in B) \leq P_X(A)^{1 - \frac{\eta_\alpha(\mu, K)}{\alpha'}} P_Y(B)^{\frac{1}{\alpha'}} \, . 
\end{equation}
By (\ref{SDPI_bound}), for any $\beta \in [\beta_K(\alpha), \alpha)$, we have
    \begin{equation}\label{Renyi_HC_comparison}
    1 - \frac{\eta_\alpha(\mu, K)}{\alpha'} \geq \frac{1}{\beta} \, ,\end{equation}
which implies that the upper bound on the rectangle probability (\ref{ineq:JD-bound}) is never weaker than the one given in (\ref{HC-bound1}).

\begin{remark}
We recall that the hypercontractivity of Markov kernels tensorizes in the following sense: Given $\alpha > 1$ and a Markov kernel $K$, we define
$ s_\alpha(K) := \beta_K(\alpha)/\alpha$.
Now let $\alpha > 1$, and consider several Markov kernels $K_1, \ldots, K_n$ with input law $\mu$. By \cite[Theorem 2]{AG76}, we have
$s_\alpha\left(\bigotimes_{i=1}^n K_i\right) = \max_{1 \leq i \leq n} s_\alpha(K_i) \, ,$ so by applying (\ref{SDPI_bound}), we obtain
\begin{equation}\label{SDPI-HC-tensorization}
\eta_\alpha\left(\mu^{\otimes n}, \bigotimes_{i=1}^n K_i \right) \leq \max_{1 \leq i \leq n} h_\alpha(\mu, K_i)\, . \end{equation}
\end{remark}

\section{Case study: Communication assisted agreement distillation}
In this section, we compare hypercontractivity and R\'enyi SDPI in the \textit{communication assisted agreement distillation} problem. Guruswami and Radhakrishnan have shown that the hypercontractivity property can be used to establish impossibility results for binary symmetric channels (BSCs) and some binary erasure channels (BECs) \cite{GR16}; Zhao has extended their results to general channels \cite[Chapter 3]{Zhao2022}.

\subsection{Setting}\label{settings}
We describe the framework of the communication assisted agreement distillation that we will discuss; for the remainder of this section of the paper, we will use the settings and the notations introduced here. 
Let $n,k\in\mathbb N$. Alice observes a binary string $X^n\sim P_{X^n} = \mu^{\otimes n} $, and Bob observes $Y^n \sim P_{Y^n} = \mu^{\otimes n} K^{\otimes n} = (\mu K)^{\otimes n}$, obtained by passing $X^n$ through the memoryless binary channel $K$. Thus,
$
(X^n,Y^n)\sim P_{X^nY^n} = (\mu\otimes K)^{\otimes n}.
$

Alice and Bob wish to agree on a uniformly distributed binary string of length $k$, using one-way communication from Alice to Bob. Alice first applies a deterministic map
$
g_A:\{0,1\}^n\longrightarrow\{0,1\}^k
$
to produce $Z=g_A(X^n)$. We require that $Z$ be exactly uniform: for every $z\in\{0,1\}^k$,
$
\mu^{\otimes n}(A_z)=2^{-k},
\text{ where }
A_z:=\{x\in\{0,1\}^n:g_A(x)=z\}.
$

For a communication fraction $c\in(0,1)$, let $m=\lfloor ck\rfloor$. Alice sends Bob the message
$
M=f_c(X^n),
\text{ where }
f_c:\{0,1\}^n\longrightarrow\{0,1\}^{m}.
$
Bob then combines this message with his observation $Y^n$, using a decoder
$
g_{B,c}:\{0,1\}^n\times\{0,1\}^{m}
\longrightarrow\{0,1\}^k,
$
to produce the estimate
$
\widehat Z=g_{B,c}(Y^n,M).
$ Let $z \in \{0, 1\}^k$ and $B_z := \{y^n \in \{0, 1\}^n: \text{$\exists$ $u \in \{0, 1\}^m$ such that }g_{B, c}(y^n, u) = z\}$. For each $y \in \{0, 1\}^n$, the number of codes that Bob's decoder may produce is at most $2^{ck}$, which implies that 
\begin{equation}\label{Bz_sum_bound}
\sum_{z \in \{0, 1\}^k}  (\mu K)^{\otimes n}(B_z) \leq 2^{ck}.
\end{equation} The message $M$ need not consist of selected bits of $Z$; it may be any prescribed function of Alice’s observation.

The protocol succeeds when Bob recovers Alice’s entire output, that is, when $\widehat Z=Z$. The desired guarantee is that, for a given $\gamma>0$, there exist $c=c_\gamma\in(0,1)$, an encoder $f_c$, and a decoder $g_{B,c}$ such that
$$
P_{X^nY^n}\!\left[
g_A(X^n)=g_{B,c}\bigl(Y^n,f_c(X^n)\bigr)
\right]\ge 2^{-\gamma k}.
$$

\subsection{Communication converses}
We will present a way of computing the communication converse (i.e. a lower bound on $c$) using R\'enyi SDPI and binary R\'enyi inversion. 

In the remainder of the paper, the R\'enyi-$\alpha$ divergence of binary distributions plays a central role. For any $\alpha > 1$, we define $d_\alpha: [0, 1]^2 \to \R_{\geq 0} \cup \{+\infty\}$ such that
 \begin{align}
    d_\alpha(p||q) \label{d_alpha}
    &:= D_\alpha(\bern(p)||\bern(q)) \\
    &= \frac{1}{\alpha - 1} \log\left[ p^\alpha q^{1-\alpha} + (1-p)^\alpha(1-q)^{1-\alpha} \right] \, . \nonumber
\end{align}

We first prove a lemma.
\begin{lemma}\label{BIE_lemma}
In the setting of \S\ref{settings}, let
$
E:=\bigcup_{z\in\{0,1\}^k}A_z\times B_z$,
$p:=P_{X^nY^n}(E)$, and
$q:=(P_{X^n}\otimes P_{Y^n})(E)$.
Then
$
p\ge 2^{-\gamma k}$, $
q\le 2^{-k(1-c)},
$
and for every \(\alpha>1\),
\[
d_\alpha(p\Vert q)
\le k\eta_\alpha(\mu^{\otimes n},K^{\otimes n}).
\]
\end{lemma}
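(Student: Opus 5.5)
The plan is to establish the three claims in turn: the two probability bounds by direct counting, and the Rényi bound by passing to the pair $(Z,Y^n)$ with $Z=g_A(X^n)$ and then applying the data-processing inequality (\ref{DPI}) to an indicator. A preliminary observation drives everything. The sets $A_z$, $z\in\{0,1\}^k$, partition $\{0,1\}^n$. Hence $(x,y)\in E$ if and only if $y\in B_{g_A(x)}$, that is, $E=\{Y^n\in B_{Z}\}$.

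\emph{Bound on $p$.} Suppose the protocol succeeds, i.e.\ $g_{B,c}(Y^n,f_c(X^n))=g_A(X^n)=z$. Then $X^n\in A_z$, and $Y^n\in B_z$ because the message $u=f_c(X^n)$ witnesses the existential in the definition of $B_z$. So the success event is contained in $E$, and $p\ge 2^{-\gamma k}$.

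\emph{Bound on $q$.} By disjointness of the $A_z$ and exact uniformity $\mu^{\otimes n}(A_z)=2^{-k}$,
\[
q=\sum_z \mu^{\otimes n}(A_z)\,(\mu K)^{\otimes n}(B_z)=2^{-k}\sum_z (\mu K)^{\otimes n}(B_z),
\]
and (\ref{Bz_sum_bound}) gives $q\le 2^{-k(1-c)}$.

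\emph{Rényi bound.} For each $z$, let $\nu_z:=\mu^{\otimes n}(\cdot\mid A_z)$. Its density is $\mathrm{d}\nu_z/\mathrm{d}\mu^{\otimes n}=2^{k}\1_{A_z}$, which is bounded and hence in $L^\alpha$. A one-line computation gives $D_\alpha(\nu_z\Vert\mu^{\otimes n})=\frac{1}{\alpha-1}\log\bigl(2^{k\alpha}2^{-k}\bigr)=k$. The definition of the SDPI constant then yields
\[
D_\alpha\bigl(\nu_zK^{\otimes n}\Vert(\mu K)^{\otimes n}\bigr)\le k\,\eta_\alpha(\mu^{\otimes n},K^{\otimes n}).
\]
This is trivially true if $\nu_z=\mu^{\otimes n}$, i.e.\ if $k=0$.

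Next, the joint law of $(Z,Y^n)$ is $P_{ZY}=2^{-k}\sum_z\delta_z\otimes\nu_zK^{\otimes n}$. This is because, given $Z=z$, the law of $X^n$ is $\nu_z$, and $Y^n$ is obtained from $X^n$ through $K^{\otimes n}$. Likewise $P_Z\otimes P_{Y^n}=2^{-k}\sum_z\delta_z\otimes(\mu K)^{\otimes n}$. Both measures put the same weight on each label $z$ and have disjoint supports across labels. So the Rényi moment splits as an average:
\[
2^{(\alpha-1)D_\alpha(P_{ZY}\Vert P_Z\otimes P_{Y^n})}=2^{-k}\sum_z 2^{(\alpha-1)D_\alpha(\nu_zK^{\otimes n}\Vert(\mu K)^{\otimes n})}\le 2^{(\alpha-1)k\eta_\alpha(\mu^{\otimes n},K^{\otimes n})}.
\]
Therefore $D_\alpha(P_{ZY}\Vert P_Z\otimes P_{Y^n})\le k\,\eta_\alpha(\mu^{\otimes n},K^{\otimes n})$.

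Finally, apply (\ref{DPI}) to the deterministic map $(z,y)\mapsto\1\{y\in B_z\}$. Under $P_{ZY}$ the probability of the event $\{Y^n\in B_Z\}$ is $P_{X^nY^n}(E)=p$. Under $P_Z\otimes P_{Y^n}$ it equals $(P_{X^n}\otimes P_{Y^n})(E)=q$, because $Z$ is a function of $X^n$ and $X^n$ is independent of $Y^n$ under the product. This gives $d_\alpha(p\Vert q)\le k\,\eta_\alpha(\mu^{\otimes n},K^{\otimes n})$.

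The main obstacle is identifying the right intermediate divergence. The natural first attempt, $D_\alpha(P_{X^nY^n}\Vert P_{X^n}\otimes P_{Y^n})$, is not controlled by $\eta_\alpha$ in an obvious way. The fix is to condition on the cells $A_z$, which have exactly known Rényi divergence $k$ from $\mu^{\otimes n}$, and to use the disjoint-label mixture identity. The remaining steps are routine.
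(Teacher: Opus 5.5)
Your proof is correct, and it follows a genuinely different route from the paper's. The bounds on $p$ and $q$ and the per-cell estimate $D_\alpha(\nu_zK^{\otimes n}\Vert(\mu K)^{\otimes n})\le k\,\eta_\alpha(\mu^{\otimes n},K^{\otimes n})$ are the same as in the paper. The two arguments differ in how the cells are combined.

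The paper applies data processing separately in each cell. This gives $d_\alpha(p_z\Vert q_z)\le k\,\eta_\alpha(\mu^{\otimes n},K^{\otimes n})$ with $p_z=K^{\otimes n}(B_z\mid A_z)$ and $q_z=P_{Y^n}(B_z)$. It then averages over $z$ using $p=2^{-k}\sum_z p_z$, $q=2^{-k}\sum_z q_z$ and the joint convexity of $(u,v)\mapsto u^\alpha v^{1-\alpha}+(1-u)^\alpha(1-v)^{1-\alpha}$.

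You reverse the order: you average first and apply data processing once. You work with the labelled pair $(Z,Y^n)$. Because both laws give $Z$ the same uniform marginal, the R\'enyi moment of $P_{ZY}$ against $P_Z\otimes P_{Y^n}$ splits as an exact average with no inequality. You then apply data processing a single time, to the indicator of $\{Y^n\in B_Z\}$.

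Your ordering buys two things.
\begin{enumerate}
\item It does not need the joint convexity of the two-point R\'enyi moment, which the paper invokes as a standard fact. That convexity is itself usually proved by exactly your labelled-mixture-plus-data-processing argument.
\item It yields a more informative intermediate bound, $D_\alpha(P_{ZY}\Vert P_Z\otimes P_{Y^n})\le k\,\eta_\alpha(\mu^{\otimes n},K^{\otimes n})$. This bound does not depend on Alice's message or Bob's decoder, which enter only through the sets $B_z$ in the final step.
\end{enumerate}

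The paper's version, in turn, works only with scalar binary quantities $p_z,q_z$. It therefore never needs to write down the joint law of $(Z,Y^n)$ or its density, which makes it slightly more elementary to state.
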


\begin{proof}
The successful-agreement event is contained in \(E\), which gives the lower bound on \(p\). By (\ref{Bz_sum_bound}), we have
\[
q=2^{-k}\sum_z P_{Y^n}(B_z)
\le 2^{-k(1-c)}.
\]
Put
$
p_z:=K^{\otimes n}(B_z\mid A_z)$ and $
q_z:=P_{Y^n}(B_z)$, and we denote $(\mu^{\otimes n})_{A_z}$ to be the conditional probability measure given $A_z$, i.e. $\mathrm{d}(\mu^{\otimes n})_{A_z} := \frac{\1_{A_z}}{\mu^{\otimes n}(A_z)} \mathrm{d} \mu^{\otimes n}$. Then uniformity gives
$
D_\alpha((\mu^{\otimes n})_{A_z}\Vert\mu^{\otimes n})=k.
$
Consequently, the data-processing inequality (\ref{DPI}) as well as the definition of SDPI yields
\begin{align*}
d_\alpha(p_z\Vert q_z)
&\le
D_\alpha\!\left(
(\mu^{\otimes n})_{A_z}K^{\otimes n}
\,\middle\Vert\,(\mu K)^{\otimes n}
\right)\\
&\le k\eta_\alpha(\mu^{\otimes n},K^{\otimes n}).
\end{align*}
Moreover, \(p=2^{-k}\sum_zp_z\) and \(q=2^{-k}\sum_zq_z\). By the joint convexity of
$
(u,v)\longmapsto
u^\alpha v^{1-\alpha}
+(1-u)^\alpha(1-v)^{1-\alpha},
$
we obtain
\begin{align*}
2^{(\alpha-1)d_\alpha(p\Vert q)}
&\le
2^{-k}\sum_z2^{(\alpha-1)d_\alpha(p_z\Vert q_z)}\\
&\le
2^{(\alpha-1)k\eta_\alpha(\mu^{\otimes n},K^{\otimes n})}.
\end{align*}
Taking logarithms proves the claim.
\end{proof}
Lemma \ref{BIE_lemma} connects the probability of successful agreement to the amount of communication: The success requirement gives a lower bound on \(P_{X^nY^n}(E)\), while the number of bits Alice sends gives an upper bound on \((P_{X^n}\otimes P_{Y^n})(E)\). The Rényi SDPI coefficient limits how far apart these two probabilities can be.

We are now ready to obtain a communication converse, using the method of \textit{binary inversion}.
\begin{theorem}\label{thm:BI}
    Recall the setting in \S\ref{settings}. Let $h_\alpha(\mu, K)$ be from (\ref{SDPI_bound}), $p_\gamma := 2^{-\gamma k}$, and let $q_\alpha(\gamma) \in (0, p_\gamma]$ such that $
        d_\alpha(p_\gamma || q_\alpha(\gamma)) = kh_\alpha(\mu, K)$.
    Then we have
    \[c \geq C_{\mathrm{BI}}(\gamma) := \sup_{\alpha > 1} \left[ 1 + \frac{1}{k} \log_2 q_\alpha(\gamma) \right]_+ \, . \]
\end{theorem}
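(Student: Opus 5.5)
Fix $\alpha>1$. The plan is to combine Lemma \ref{BIE_lemma} with the tensorization bound (\ref{SDPI-HC-tensorization}) and a monotonicity property of $q\mapsto d_\alpha(p\Vert q)$, and then take the supremum over $\alpha$.

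First, applying (\ref{SDPI-HC-tensorization}) with $K_i=K$ for every $i$ gives $\eta_\alpha(\mu^{\otimes n},K^{\otimes n})\le h_\alpha(\mu,K)$. Combined with Lemma \ref{BIE_lemma}, this yields
\[
d_\alpha(p\Vert q)\le k h_\alpha(\mu,K)=d_\alpha(p_\gamma\Vert q_\alpha(\gamma)),
\]
where $p\ge p_\gamma$ and $q\le 2^{-k(1-c)}$.

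Next, I want to transfer this to $q\ge q_\alpha(\gamma)$. Suppose, for contradiction, that $q<q_\alpha(\gamma)\le p_\gamma\le p$. The function $d_\alpha(\cdot\Vert\cdot)$ is jointly quasi-convex and vanishes on the diagonal. Hence for fixed $q$ the map $u\mapsto d_\alpha(u\Vert q)$ is nondecreasing on $[q,1]$, and for fixed $u$ the map $v\mapsto d_\alpha(u\Vert v)$ is nonincreasing on $[0,u]$. These one-dimensional monotonicities can also be checked directly: on $u\ge v$ one has $\partial_u>0$ and $\partial_v<0$ for the expression inside the logarithm in (\ref{d_alpha}). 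They give
\[
d_\alpha(p\Vert q)\ge d_\alpha(p_\gamma\Vert q)> d_\alpha(p_\gamma\Vert q_\alpha(\gamma)) = k h_\alpha(\mu,K),
\]
which contradicts the previous display. The strictness in the second inequality needs strict monotonicity in $v$ on $(0,p_\gamma)$, which holds for $p_\gamma\in(0,1)$.

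Therefore $q_\alpha(\gamma)\le q\le 2^{-k(1-c)}$. Taking $\log_2$ gives $c\ge 1+\frac1k\log_2 q_\alpha(\gamma)$. Since $c>0$ as well, we get $c\ge\left[1+\frac1k\log_2 q_\alpha(\gamma)\right]_+$. Taking the supremum over $\alpha>1$ gives $c\ge C_{\mathrm{BI}}(\gamma)$.

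The main obstacle is the monotonicity step, especially the case $p>p_\gamma$. There one needs $d_\alpha(\cdot\Vert q)$ to be increasing beyond $q$, which is where the assumption $q<p_\gamma$ (inherited from $q<q_\alpha(\gamma)\le p_\gamma$) is essential. A side issue is that $q_\alpha(\gamma)$ must exist. Since $d_\alpha(p_\gamma\Vert v)$ decreases continuously from $+\infty$ as $v\to 0$ down to $0$ at $v=p_\gamma$, the intermediate value theorem gives a unique solution whenever $kh_\alpha(\mu,K)$ is finite and positive. If $h_\alpha=0$, one takes $q_\alpha(\gamma)=p_\gamma$, and the argument goes through unchanged.
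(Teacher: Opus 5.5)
Your proposal is correct and follows essentially the same route as the paper. Lemma~\ref{BIE_lemma} together with the tensorization bound (\ref{SDPI-HC-tensorization}) gives $d_\alpha(p\Vert q)\le kh_\alpha(\mu,K)$, and monotonicity of $d_\alpha$ in each argument then forces $q_\alpha(\gamma)\le q\le 2^{-k(1-c)}$. The differences are cosmetic: you argue by contradiction (assuming $q<q_\alpha(\gamma)$, which automatically lands you in the paper's case $q\le p_\gamma$) rather than splitting on $q$ versus $p_\gamma$, and you make explicit the strict monotonicity in the second argument that the paper's last step uses implicitly.
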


\begin{remark}
    There exists a unique $q_\alpha(\gamma)$ that satisfies the hypothesis of the theorem because the map $(0, p_\gamma] \ni q \mapsto d_\alpha(p_\gamma\| q)$ decreases continuously from $+\infty$ to zero. The quantity \(q_\alpha(\gamma)\) is the smallest product-law probability compatible with the constraints \(p\geq p_\gamma\) and \(d_\alpha(p\|q)\leq kh_\alpha\). 
\end{remark}

\begin{proof}[Proof of Theorem \ref{thm:BI}]
    Note that the proof is complete once we observe that $q_\alpha(\gamma) \leq q$, since that would imply
$q_\alpha(\gamma) \leq q \leq 2^{-k(1-c)}$,
and the conclusion of the theorem holds once we take the logarithms of both sides. And note that this is immediate if $q \geq p_\gamma$ since $q_\alpha(\gamma) \leq p_\gamma$. On the other hand, if $q \leq p_\gamma$, then $p \geq p_\gamma \geq q$, and the monotonicity (non-decreasing) of $d_\alpha$ in its first argument gives $d_\alpha(p_\gamma|| q) \leq d_\alpha(p||q).$ And by Lemma \ref{BIE_lemma} and (\ref{SDPI-HC-tensorization}), we have
\begin{align}\label{ultimateBound}
    d_\alpha(p_\gamma || q) \leq d_\alpha(p||q) &\leq k \eta_\alpha(\mu^{\otimes n}, K^{\otimes n})\\ &\leq  kh_\alpha(\mu, K) \,.
\end{align}
Hence, the fact that $d_\alpha$ is decreasing in the second argument on $(0, p_\gamma]$ would guarantee that $q_\alpha(\gamma) \leq q$, so we get the desired result.
\end{proof}

We next obtain an explicit bound directly from Lemma \ref{BIE_lemma} by dropping the complementary term in the binary R\'enyi divergence.

\begin{corollary}\label{thm:caad}
In the setting of \S\ref{settings}, we have
\begin{equation}\label{cLB1}
c\ge
C_{\mathrm{SDPI}}(\gamma) :=\sup_{\alpha>1}
\left[
1-\eta_\alpha(\mu^{\otimes n},K^{\otimes n})
-\alpha'\gamma
\right]_+.
\end{equation}
\end{corollary}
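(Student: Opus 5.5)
Fix $\alpha>1$ and write $\eta:=\eta_\alpha(\mu^{\otimes n},K^{\otimes n})$. Let $p,q$ be as in Lemma~\ref{BIE_lemma}, so that
\[
p\ge 2^{-\gamma k},\qquad q\le 2^{-k(1-c)},\qquad d_\alpha(p\Vert q)\le k\eta .
\]
The plan is to drop the complementary term in $d_\alpha$, which leaves an explicit inequality linking $p$, $q$ and $\eta$. Substituting the two bounds on $p$ and $q$ then isolates $c$.

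The first step is the lower bound on $d_\alpha$. Since $\alpha>1$, we have $(1-p)^\alpha(1-q)^{1-\alpha}\ge 0$. Therefore
\[
2^{(\alpha-1)d_\alpha(p\Vert q)}\ge p^\alpha q^{1-\alpha},
\]
so that
\[
d_\alpha(p\Vert q)\ge \frac{\alpha}{\alpha-1}\log p-\log q=\alpha'\log p-\log q .
\]
Combining this with Lemma~\ref{BIE_lemma} gives $\alpha'\log p-\log q\le k\eta$.

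The second step inserts the bounds on $p$ and $q$. From $\log p\ge-\gamma k$ and $-\log q\ge k(1-c)$ we get
\[
-\alpha'\gamma k+k(1-c)\le k\eta .
\]
Dividing by $k$ yields $c\ge 1-\eta-\alpha'\gamma$. Since $c>0$ anyway, this strengthens to $c\ge[1-\eta-\alpha'\gamma]_+$. The inequality holds for every $\alpha>1$, so taking the supremum over $\alpha$ gives \eqref{cLB1}.

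The only step needing care is the case where $d_\alpha$ might be infinite or the logarithms degenerate, namely $q=0$. This case cannot occur. Indeed $p>0$, and $d_\alpha(p\Vert q)\le k\eta<\infty$ forces $q>0$. Also $p\le 1$, so there is no trouble at the upper end. Beyond this, the argument is a direct relaxation and I expect no real obstacle.
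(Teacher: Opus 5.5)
Your proposal is correct and follows the paper's proof essentially step for step. You drop the term $(1-p)^\alpha(1-q)^{1-\alpha}$ to get $d_\alpha(p\Vert q)\ge\alpha'\log p-\log q$, insert the bounds on $p$ and $q$ from Lemma~\ref{BIE_lemma}, rearrange using $c\ge 0$, and take the supremum over $\alpha>1$; your extra check that $q>0$ is a harmless detail the paper leaves implicit.
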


\begin{proof}Let \(p,q\) be as in Lemma \ref{BIE_lemma}. Dropping the complementary term in the binary Rényi divergence gives
\[
\begin{aligned}
k\eta_\alpha(\mu^{\otimes n},K^{\otimes n})
&\ge d_\alpha(p\Vert q)\\
&\ge \alpha'\log_2p-\log_2q\\
&\ge -\alpha'\gamma k+k(1-c).
\end{aligned}
\]
Rearranging, using \(c\ge0\), and taking the supremum over \(\alpha>1\) proves the result. 
\end{proof}

\begin{remark}[Comparison with the hypercontractivity bound]
    The lower bound induced by hypercontractivity \cite[Theorem 3.1.1]{Zhao2022} says that
    \begin{equation}\label{cLB2}
        c \geq C_{\mathrm{HC}}(\gamma) := \sup_{\alpha > 1} \left[ 1 - h_\alpha(\mu, K) - \alpha' \gamma \right]_+ \,.
    \end{equation}
    By (\ref{SDPI-HC-tensorization}), we can witness that the lower bound that we obtained in Corollary \ref{thm:caad} is never weaker than that of (\ref{cLB2}).
\end{remark}

\begin{remark}[The case $\alpha = \infty$]
    By \cite[Corollary 1]{VEG_contraction}, if we let $\alpha \to \infty$ (thus $\alpha' \to 1$) in (\ref{cLB1}), we get
    \begin{equation}\label{cLB1-infty} c \geq 1 - \eta_\infty(\mu^{\otimes n}, K^{\otimes n}) - \gamma \, . \end{equation}
    By contrast, if we let $\alpha \to \infty$ for the hypercontractivity estimate (\ref{cLB2}), note that unless $X$ and $Y$ are independent, we must also let $\beta \to \infty$ by \cite[Theorem 3.1]{KA_2016} (thus $\beta' \to 1$), so it becomes $c \geq -\gamma$, which is  a trivial bound. 
    In the case of (\ref{cLB1-infty}), it is possible to obtain a positive lower bound if $\eta_\infty\left( \mu^{\otimes n}, K^{\otimes n}\right) < 1 - \gamma$. One condition that guarantees that is if $K^{\otimes n}$ is $(\gamma+\epsilon)$-ultra-mixing for some $\epsilon \in (0, 1-\gamma]$, i.e. $\frac{K^{\otimes n}(y|x')}{K^{\otimes n}(y|x)} \geq \gamma + \epsilon$ for all $x, x' \in \mathsf{X}^n$ and $y \in \supp(K^{\otimes n}(\cdot|x))$; this is due to \cite[Theorem 7]{VEG_contraction},  which states that for any Markov chain $M$ and $\delta \in (0, 1)$, $\eta_\infty(M) := \sup_{\mu \in \mathcal{P}(\mathsf{X})} \eta_\infty(\mu, M) \leq 1 - \delta$ if and only if $M$ is $\delta$-ultra-mixing, which is also equivalent to the fact that $M$ satisfies $\ln(1/\delta)$-local differential privacy \cite[Theorem 13]{VEG_contraction}. 
\end{remark}

\subsection{Examples}
Here we provide two examples of the communication converse obtained in Theorem \ref{thm:BI}.
We note that we may approximate the value of $q_\alpha(\gamma)$ from the theorem using bisection: Set $x_\alpha = -\log q_\alpha(\gamma)$. For each $\alpha > 1$, we can approximate the solution
    $d_\alpha(p_\gamma || 2^{-x_\alpha}) = kh_\alpha$ by bisection on the guaranteed bracket $[k\gamma, kh_\alpha + \frac{\alpha}{\alpha -1} k\gamma]$. 
\begin{example}[Doubly symmetric binary symmetric channel]\label{ex:DSBS}
    Consider the doubly symmetric binary symmetric channel (DSBS) of parameter $\lambda \in (0, 1/2)$; we denote this channel $K_\lambda$. Recall that we can compute $\beta_K(\alpha)$ (and $h_\alpha$) from (\ref{DSBS}). 

    If $\alpha = 2$, we can explicitly compute $q_2(\gamma)$; we have $q_2(\gamma) = 2p_\gamma^2/\Delta(\gamma)$, where $\Delta(\gamma) : = 2^{kh_2} + 2p_\gamma - 1 + \sqrt{(2^{kh_2} + 2p_\gamma - 1)^2 - 4 \cdot 2^{kh_2}p_\gamma^2}\,. $    If $\lambda = 0.1$, $\gamma = 0.01$ and $k = 20$, our example gives
    $q_2 \approx 1.516 \times 10^{-5}$, and $1 + \frac{\log_2(q_2(\gamma))}{20} \approx 0.1995$.
    Numerical optimization gives $$C_{\mathrm{BI}}(0.01) \approx 0.2640 > C_{\mathrm{HC}}(\gamma) \approx 0.2604,$$ with the optimum for R\'enyi attained when $\alpha \approx 1.1865$.
    \end{example}

    \begin{example}[Rare-event Z-Channel]\label{ex:Z}
    Given $t \in (0, 1)$, let $X \sim \bern(t)$, $K_t(1|0) = 0$, and $K_t(1|1) = t$. Consequently, we have $Y \sim \bern(t^2)$, and the nonzero joint probabilities are $P_{XY}(0, 0) = 1-t$, $P_{XY}(1, 0) = t(1-t)$, and $P_{XY}(1, 1) = t^2$.

    Fig. \ref{fig:z-channel} shows the numerical computation of the communication converse between the binary-inversion (Theorem \ref{thm:BI}) and hypercontractivity (\ref{cLB2}) for the case when $t \approx 0.048$, $n = 56$ and $k = 4$. Since we are working on a binary system, the numerical value for the hypercontractivity bound was obtained by approximating (via bisection) $\beta_{K_t}(\alpha) =  \inf\bigl\{\beta \in [1, \alpha): \frac{1}{\alpha'} d_\alpha(tr || t^2) \leq \frac{1}{\beta'}d_\beta(r||t)  \text{ for every } r \in [0, 1] \bigr\}$.
    \end{example}

\begin{figure}[tbp]
    \centering
    \includegraphics[width=0.9\linewidth]
        {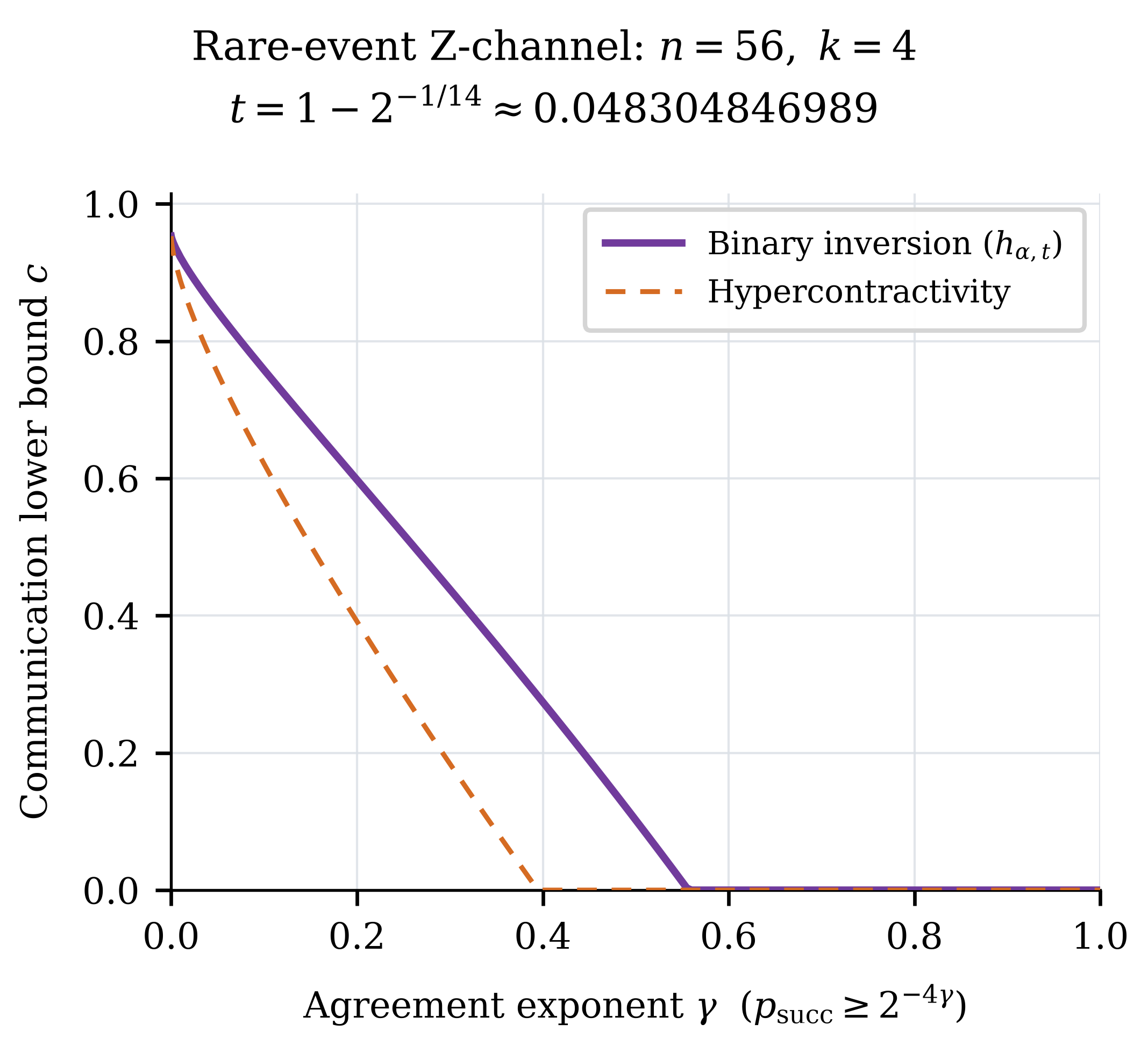}
    \caption{Comparison between the bound given by the binary-inversion SDPI (Thm. \ref{thm:BI}) and  hypercontractivity (\ref{cLB2}) for the Z-channel defined in Example~\ref{ex:Z} with $t \approx 0.0483$, $n = 56$ and $k=4$.}
    \label{fig:z-channel}
\end{figure}
\newpage

\section*{AI Statement}
The authors used OpenAI’s Codex for language editing and expository suggestions throughout the manuscript, mathematical consistency checks, literature and reference checks, and numerical verification for Examples \ref{ex:DSBS} and \ref{ex:Z} (including the production of Figure \ref{fig:z-channel}). AI was also used to generate this statement. The authors reviewed and edited the AI-assisted material and take full responsibility for the content of the paper.
\printbibliography
\end{document}